\documentclass{article}
\usepackage{spconf,amsmath,graphicx}

\usepackage{cite}
\usepackage{amssymb,amsfonts}
\usepackage{textcomp}
\usepackage{xcolor}
\usepackage{algorithm}
\usepackage{algpseudocode}
\usepackage{amsthm}

\usepackage{balance}

\theoremstyle{definition}
\newtheorem{definition}{Definition}
\newtheorem{proposition}{Proposition}

\newtheorem{corollary}{Corollary}

{
	\theoremstyle{plain}
	
}

\providecommand{\customgenericname}{}
\newcommand{\newcustomtheorem}[2]{%
	\newenvironment{#1}[1]
	{%
		\renewcommand\customgenericname{#2}%
		\renewcommand\theinnercustomgeneric{##1}%
		\innercustomgeneric
	}
	{\endinnercustomgeneric}
}

\DeclareMathOperator*{\argmin}{arg\,min}
\newcustomtheorem{customthm}{Theorem}
\newcustomtheorem{customlemma}{Lemma}
\newcustomtheorem{customDef}{Definition}


\title{Unique Bispectrum Inversion for Signals with \\Finite Spectral/Temporal Support}
%
\name{Samuel Pinilla$^{\dag}$, Kumar Vijay Mishra$^{\ddag}$ and Brian M. Sadler$^{\ddag}$ \thanks{This research was sponsored by the Army Research Office/Laboratory. K. V. M. acknowledges support from the National Academies of Sciences, Engineering, and Medicine via Army Research Laboratory Harry Diamond Distinguished Postdoctoral Fellowship. S. P. acknowledges support from the EMET Research Institute in Colombia.} \vspace{-1em}}
\address{$^{\dag}$Science and Technology Facilities Council, Turing Hub at Harwell, Oxfordshire OX110FA, UK\\ 
	$^{\ddag}$United States CCDC Army Research Laboratory, Adelphi, MD 20783 USA \vspace{-1em}}
%
%
%
\begin{document}
\setlength{\abovedisplayskip}{3pt}
\setlength{\belowdisplayskip}{3pt}

\ninept
\maketitle
\begin{abstract}
Retrieving a signal from its triple correlation spectrum, also called bispectrum, arises in a wide range of signal processing problems. Conventional methods do not provide an accurate inversion of bispectrum to the underlying signal. In this paper, we present an approach that uniquely recovers signals with finite spectral support (band-limited signals) from at least $3B$ measurements of its bispectrum function (BF), where $B$ is the signal's bandwidth. Our approach also extends to time-limited signals. We propose a two-step trust region algorithm that minimizes a non-convex objective function. First, we approximate the signal by a spectral algorithm and then refine the attained initialization based on a sequence of gradient iterations. Numerical experiments suggest that our proposed algorithm is able to estimate band-/time-limited signals from its BF for both complete and undersampled observations.
\end{abstract}
\begin{keywords}
Bispectrum function, finite support signals, non-convex optimization, third-order statistics, waveform design.
\end{keywords} \vspace{-1em}
\section{Introduction}\vspace{-0.5em}
In many science and engineering problems, the information conveyed by the autocorrelation or power spectrum of a signal does not sufficiently characterize the signal \cite{mendel1991tutorial}. Extracting information from non-Gaussian and nonlinear signals usually requires computing the higher-order statistics and their Fourier transforms (or polyspectra) \cite{nikias1993signal}. The higher-order spectra analysis may present situations where it is desired to estimate the signal from its polyspectra measurements. For example, in weather radars \cite{mishra2016deployment}, filtering the Gaussian meteorological echoes \cite{bringi2020retrieval} from the non-Gaussian clutter such as wind turbines \cite{hood2010automatic} and anomalous propagation \cite{thurai2017initial} requires inverting the polyspectra samples to the original signal \cite{doviak2014doppler}. In this paper, we focus on the inversion of the third-order cumulant spectrum, also known as the bispectrum \cite{tryon1981bispectrum,nikias1987bispectrum}.

The bispectrum finds applicability in fields such as optics \cite{sadler1992shift}, time-series analysis \cite{godfrey1965exploratory}, seismology \cite{hasselmann1963time}, holography \cite{sato1977bispectral}, and radar \cite{chen2008feature}. 
The \textit{bispectrum function} (BF) is a two-dimensional mapping of circularly-translated versions of a signal $\boldsymbol{x}\in \mathbb{C}^{N}$ defined in terms of its Fourier transform $\boldsymbol{y}\in \mathbb{C}^{N}$ as \cite{matsuoka1984phase}
\begin{align}
	\boldsymbol{B}[k_{1},k_{2}] = \boldsymbol{y}[k_{1}]\overline{\boldsymbol{y}}[k_{2}]\boldsymbol{y}[k_{1}-k_{2}],
	\label{eq:BF}
\end{align}
where $N$ is the size of the signals and $k_{1},k_{2}=0,\dots,N-1$ are Fourier frequencies. This formal model represents the contribution to the cube of the signal from the product of three Fourier components whose frequencies sum to zero \cite{matsuoka1984phase,hasselmann1963time}. 

Classical BF studies \cite{matsuoka1984phase,marron1990unwrapping,rangoussi1990fir} have explored the benefits of this function to determine the wavelet phase of the underlying pulse and when the signal is band-limited (its Fourier transform contains consecutive zeros). For this scenario, the reconstruction procedure first estimates the translations  \cite{brillinger1977identification,lii1982deconvolution,matsuoka1984phase,bartelt1984phase}. Given these estimates, the underlying pulse $\boldsymbol{x}$ is easily retrieved by aligning all observations and then averaging to reduce the noise. However, when the size of the signal increases, the computational complexity of the algorithm becomes intractable. Further, these methods lack performance and uniqueness guarantees to estimate the signal.

More recent works \cite{bendory2017bispectrum} have proposed convex and non-convex optimizations to estimate a particular set of signals with finite spectral support from its BF. This work also suggests that the non-convex method performs better than the convex approach. Contrary to the classical approaches \cite{brillinger1977identification,lii1982deconvolution,matsuoka1984phase}, these recent methods estimate the signal directly, without estimating the shifts, and therefore reducing computational complexity compared with the classical strategies. While \cite{bendory2017bispectrum} provides some theoretical results to analyze the convergence of the non-convex algorithms, their performance is yet to be characterized. Moreover, these methods were developed for only some specific scenarios, namely signals with finite spectral support. However, signals with finite temporal support are also frequently encountered in applications such as radar  \cite{theron1999ultrawide,chen2008mimo,pinilla2022phase,pinilla2021wavemax}. Hence, a more general approach with theoretical uniqueness results and a recovery algorithm is highly desired.

In this paper, we present a uniqueness result to estimate a signal with finite spectral or temporal support from its BF. We show that the underlying signal can be recovered from at least $3B$ ($3S$) measurements where $B$ ($S$) is the bandwidth (time duration). Additionally, we propose a trust region algorithm that minimizes a non-convex objective function to iteratively estimate the band-/time-limited signal-of-interest. We first obtain an initial estimate of the signal through an iterative spectral algorithm. Then, this initialization is refined through a sequence of gradient iterations. We present results on uniqueness and sampling complexity (minimum number of required samples) for recovering band-/time-limited signals from its BF hitherto unreported in the existing literature. Unlike previous studies \cite{bendory2017bispectrum,matsuoka1984phase}, where the gradient descent rule is unable to accurately solve this highly ill-posed problem, we exploit the continuity of the spectrum through the Paley-Wiener theorem to change the initialization method leading to a unique solution. Our numerical experiments show that the proposed algorithm recovers band-/time-limited signals from its BF even when the latter is undersampled.


Throughout this paper, we use boldface lowercase and uppercase letters for vectors and matrices, respectively. The sets are denoted by calligraphic letters and $\lvert \cdot \rvert$ represents the cardinality of the set. The conjugate and conjugate transpose of the vector $\mathbf{w}\in \mathbb{C}^{N}$ are denoted as $\overline{\mathbf{w}}\in \mathbb{C}^{N}$ and $\mathbf{w}^{H}\in \mathbb{C}^{N}$, respectively. The $n$-th entry of a vector $\mathbf{w}$, which is assumed to be periodic, is $\mathbf{w}[n]$. The $(k,l)$-th entry of a matrix $\boldsymbol{A}$ is $\boldsymbol{A}[k,l]$. We denote the Fourier transform of a vector and its conjugate reflected version (that is, $\hat{\mathbf{w}}[n] := \overline{\mathbf{w}}[-n]$) by $\tilde{\mathbf{w}}$ and $\hat{\mathbf{w}}$, respectively. Additionally, we use $\odot$ and $*$ for the Hadamard (point-wise) product, and convolution, respectively; $\sqrt{\cdot}$ is the point-wise square root; superscript within parentheses as $(\cdot)^{(t)}$ indicates the value at $t$-th iteration;  $\mathcal{R}(\cdot)$ denotes the real part of its complex argument; and $\omega=e^{\frac{2\pi i}{n}}$ is the $n$-th root of unity. \vspace{-1em}

\section{Problem Formulation}
\label{sec:problem}\vspace{-0.5em}
The BF defined in \eqref{eq:BF} is a map $\mathbb{C}^{N}\rightarrow \mathbb{C}^{N\times N}$ that has four types of symmetry \cite{matsuoka1984phase}, usually called \textit{trivial ambiguities} as follows:
\begin{enumerate}
	\item $\boldsymbol{B}(k_{1},k_{2}) = \boldsymbol{B}(k_{2},k_{1})$.
	\item $\boldsymbol{B}(k_{1},k_{2}) = \boldsymbol{B}(k_{1},-k_{1} - k_{2})$.
	\item $\boldsymbol{B}(k_{1},k_{2}) = \boldsymbol{B}(k_{2},-k_{1} - k_{2})$.
	\item $\boldsymbol{B}(k_{1},k_{2}) = \boldsymbol{B}^{*}(-k_{1},-k_{2})$ if $\boldsymbol{x}\in \mathbb{R}^{N}$.
\end{enumerate}
We introduce the following useful definitions.
\begin{definition}[$B$-Band-Limitedness]
	The signal $\boldsymbol{x}\in \mathbb{C}^{N}$ is a $B$-band-limited signal if its Fourier transform $\tilde{\boldsymbol{x}}\in \mathbb{C}^{N}$ contains $N-B$ consecutive zeros. That is, there exists $k$ such that $\tilde{\boldsymbol{x}}[k]=\cdots=\tilde{\boldsymbol{x}}[N+k-B-1]=0$.
	\label{def:bandlimitedSignal}
\end{definition}
\begin{definition}[$S$-Time-Limitedness]
	The signal $\boldsymbol{x}\in \mathbb{C}^{N}$ is a $S$-time-limited signal if $\boldsymbol{x}\in \mathbb{C}^{N}$ contains $N-S$ consecutive zeros. That is, there exists $k$ such that $\boldsymbol{x}[k]=\cdots=\boldsymbol{x}[N+k-S-1]=0$.
	\label{def:timelimited}
\end{definition}
We first focus on the band-limited case, wherein our goal is to uniquely identify a band-limited signal $\boldsymbol{x}$, up to trivial ambiguities, from $\boldsymbol{B}$ under mild conditions stated in the following Proposition~\ref{theo:uniqueness}. 
\begin{proposition}
	Assume $\boldsymbol{x}\in \mathbb{C}^{N}$ to be a $B$-band-limited signal as in Definition \ref{def:bandlimitedSignal} for some $B\leq N/2$. Then, almost all signals are uniquely determined from $\boldsymbol{B}[k_{1},k_{2}]$, up to trivial ambiguities, from $m\geq 3B$ measurements. If, in addition, the signal's power spectrum is known and $N \geq 3$, then $m\geq2B$ suffice.
	\label{theo:uniqueness}
	\vspace{-0.5em}
\end{proposition}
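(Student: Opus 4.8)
To prove Proposition~\ref{theo:uniqueness}, the plan is to exploit the band-limited structure to collapse the problem to recovering the $B$ consecutive Fourier coefficients of $\boldsymbol{x}$ that carry its energy, and then to show that both their moduli and their arguments are pinned down — up to the trivial ambiguities of \eqref{eq:BF} — by $O(1)$ one-dimensional slices of the BF, each of length $B$. Write $\tilde{\boldsymbol{x}}[k]=r_ke^{i\phi_k}$. We treat the representative case where the occupied band is $\{0,1,\dots,B-1\}$ (other positions, which are identified from $\mathrm{supp}\,\boldsymbol{B}$, are handled the same way), and we work on the complement of a measure-zero set contained in a proper real-algebraic subvariety — the locus where some $\tilde{\boldsymbol{x}}[k]$, $k\in\{0,\dots,B-1\}$, vanishes, together with a few further non-degeneracy conditions appearing below — which is exactly where ``almost all signals'' enters. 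The key structural point is that, because $B\le N/2$ and the band sits at the bottom, the wrap-around index $(k_1-k_2)\bmod N$ cannot re-enter $\{0,\dots,B-1\}$ unless $k_1\ge k_2$; hence $\boldsymbol{B}[k_1,k_2]=\tilde{\boldsymbol{x}}[k_1]\,\overline{\tilde{\boldsymbol{x}}[k_2]}\,\tilde{\boldsymbol{x}}[k_1-k_2]$ vanishes outside the triangle $0\le k_2\le k_1\le B-1$, so the informative part of the BF behaves as the aperiodic bispectrum of the length-$B$ segment of $\tilde{\boldsymbol{x}}$ and the recursion below closes with no aliasing; this is where the hypothesis $B\le N/2$ is used.

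Recovery then proceeds along three slices. First, the moduli from the diagonal: $|\boldsymbol{B}[0,0]|=r_0^3$ yields $r_0$ and $|\boldsymbol{B}[k,k]|=r_k^2r_0$ yields $r_k$, $k=1,\dots,B-1$ — a slice of $B$ values, omitted entirely when the power spectrum is known. Second and third, the arguments from two further slices: $\arg\boldsymbol{B}[0,0]=\phi_0$ fixes $\phi_0$; the quadratic relations $\arg\boldsymbol{B}[k,0]=2\phi_k-\phi_0$ pin each $\phi_k$ down to a sign; and the bispectral recursion $\arg\boldsymbol{B}[k,1]=\phi_k-\phi_1+\phi_{k-1}$, $k=2,\dots,B-1$, propagates a single consistent choice of those signs along the chain $\phi_0\to\phi_1\to\cdots\to\phi_{B-1}$, leaving exactly the unavoidable flip $\phi_k\mapsto\phi_k+\pi$ (equivalently $\tilde{\boldsymbol{x}}[k]\mapsto(-1)^k\tilde{\boldsymbol{x}}[k]$, which leaves $\boldsymbol{B}$ invariant, together with the conjugate reflection of the fourth listed symmetry when $\boldsymbol{x}$ is real). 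Three slices of length $B$ give $m\ge 3B$; with the power spectrum known the diagonal slice is unnecessary, so $m\ge 2B$ suffices, the mild hypothesis $N\ge 3$ serving only to exclude degenerate small-$N$ cases.

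It then remains to verify that these steps reconstruct $\boldsymbol{x}$ uniquely up to the stated ambiguities — i.e.\ that the chain above carries no free parameter beyond the flip — and that the exceptional set is as claimed: the vanishing of an in-band coefficient, plus the loci where a bispectral value used as an initialization or a denominator degenerates, again a proper subvariety. The time-limited case of Definition~\ref{def:timelimited} follows verbatim by Fourier duality, exchanging $\boldsymbol{x}\leftrightarrow\tilde{\boldsymbol{x}}$ and $B\leftrightarrow S$.

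The main obstacle is the interaction with the circular wrap-around rather than the recursion itself: one must confirm that for every admissible configuration with $B\le N/2$ the informative bispectral samples form the aliasing-free triangular pattern above, so that no spurious algebraic relations among the $r_k,\phi_k$ are introduced and no ambiguities beyond the listed ones survive; and one must show the counts are essentially sharp — retaining only the phase slice $\boldsymbol{B}[k,1]$ leaves $\phi_1$, and with it the odd-indexed arguments, a free real parameter, while dropping the magnitude slice without knowing the power spectrum leaves the moduli $r_k$ free, so neither $3B$ nor $2B$ can be lowered. Pinning down the precise exceptional subvariety and establishing this sharpness is where most of the technical effort lies.
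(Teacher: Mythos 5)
Your proposal follows essentially the same route as the paper's proof: exploit the triangular (aliasing-free) support of the BF when the occupied band sits in $\{0,\dots,B-1\}$ with $B\le N/2$, read the moduli off the diagonal $\boldsymbol{B}[k,k]$, obtain $\tilde{\boldsymbol{x}}^{2}[k]$ (hence each phase mod $\pi$) from the first column $\boldsymbol{B}[k,0]$, and resolve the remaining signs recursively through the $k_{2}=1$ slice $\boldsymbol{B}[k,1]$, yielding the same $3B$ and $2B$ sample counts. If anything, your explicit bookkeeping of the residual $(-1)^{k}$ flip as a trivial (half-period shift) ambiguity is cleaner than the paper's logarithm-based claim that the selection of $\boldsymbol{y}[1]$ is unique, but the decomposition, the measurements used, and the recursion are the same.
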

The \textit{almost all} signals in Proposition \ref{theo:uniqueness} imply that the set of signals, which cannot be uniquely determined up to trivial ambiguities, is contained in the vanishing locus of a nonzero polynomial.	\vspace{-0.5em}
\begin{proof}	
	Assume that $\boldsymbol{y}\in \mathbb{C}^{N}$ is the Fourier transform of $\boldsymbol{x}\in \mathbb{C}^{N}$. Also, consider that $B = N/2$, $N$ is even, that $\boldsymbol{y}[n] \not= 0$ for $k = 0\dots,B-1$, and that $\boldsymbol{y}[n] = 0$ for $k = N/2,\dots, N-1$. If the signal's nonzero Fourier coefficients are not in the interval $0,\dots, N/2-1$, then we can cyclically re-index the signal without affecting the proof. If $N$ is odd, then one should replace $N/2$ by $\lfloor N/2 \rfloor$ everywhere in the sequel.
	
	Observe that the band-limited assumption on the signal forms the following structure on $\boldsymbol{B}[k_{1},k_{2}]$, where each row represents a fixed $k_{1}$ and varying $k_{2}$ for $k_{1}=0,\dots,N/2-1$\vspace{-0.5em}
	\begin{align}
		\label{eq:piramid}
		&\boldsymbol{y}[0]\lvert \boldsymbol{y}[0]\rvert^{2}, 0,\dots,0,0,\dots,0 \nonumber\\
		& \boldsymbol{y}^{2}[1] \overline{\boldsymbol{y}}[0], \lvert \boldsymbol{y}[1] \rvert^{2} \boldsymbol{y}[0], 0,\dots,0,0,\dots,0  \nonumber\\
		&\hspace{10em}\vdots\nonumber\\
		&\boldsymbol{y}^{2}[B-1] \overline{\boldsymbol{y}}[0],\boldsymbol{y}[B-1]\boldsymbol{y}[1]\boldsymbol{y}[B-2],\dots,\lvert \boldsymbol{y}[B-1] \rvert^{2} \boldsymbol{y}[0],0,\dots,0\hspace{5em} \nonumber\\
		&\hspace{10em}\vdots \nonumber\\
		&0,0,0,\dots,0,0,\dots,0.
	\end{align}
	
	\textbf{Step 0: }From the $0$-th row of \eqref{eq:piramid}, we see that $\boldsymbol{B}[0,0]= \boldsymbol{y}[0]\lvert \boldsymbol{y}[0]\rvert^{2}.$ Considering that the translation ambiguity is continuous, we can set $\boldsymbol{y}[0]$ to be real and, without loss of generality, it can be assumed that $\boldsymbol{y}[0]=1$ \cite{pinilla2021banraw}. Then, we obtain $\boldsymbol{B}[k,k] = \lvert \boldsymbol{y}[k] \rvert^{2}$ (main diagonal of BF), and $\boldsymbol{B}[k,0]=\boldsymbol{y}^{2}[k]$(first column of BF), $\forall k=1,\dots,B-1.$
	
	\textbf{Step 1: }From \textbf{Step 0} we have that $\boldsymbol{y}[0]$, $\{ \lvert \boldsymbol{y}[k] \rvert^{2}\}_{k=1}^{B-1}$, and $\{ \boldsymbol{y}^{2}[k]\}_{k=1}^{B-1}$ are known. There are up to $2^{B-1}$ vectors, modulo global phase that satisfy the constraints $\boldsymbol{B}[k,0] = \boldsymbol{y}^{2}[k]$ (because of the sign).
	
	\textbf{Step 2: }Fix one of the possible solutions for $\boldsymbol{y}[1]$ from \textbf{Step~1}. Moving to analyze the third row of \eqref{eq:piramid} we have that $\boldsymbol{y}[2]$ can be uniquely determined because $\boldsymbol{B}[2,1]=\boldsymbol{y}[2]\lvert \boldsymbol{y}[1]\rvert^{2}$. \vspace{0.5em}
	
	Despite a large number of possible solutions, we can prove that, at this step, there is only one vector (up to trivial ambiguities) out of the $2^{B-1}$ possibilities of \textbf{Step 2}, that is consistent with the equality constraints in the previous steps. We show this as follows: from \textbf{Step 1}, we have the knowledge of $\lvert \boldsymbol{y}[1]\rvert$, and $\boldsymbol{y}^{2}[1]$, which implies that a unique selection of $\boldsymbol{y}[1]$ in \textbf{Step 2} exists. Consequently, a unique selection of $\boldsymbol{y}[2]$ exists in this step because $\log(\boldsymbol{y}^{2}[1]) = 2\log(\boldsymbol{y}[1]) = 2(i\theta_{1} + \log(\lvert \boldsymbol{y}[1]\rvert))$, where $\boldsymbol{y}[1] = \lvert \boldsymbol{y}[1]\rvert e^{i\theta_{1}}$ implying that the phase $\theta_{1}$ is uniquely determined.
	
	\textbf{Step $B-1$: }Considering that from the $B-2$ previous steps, the entries $\boldsymbol{y}[0],\dots,\boldsymbol{y}[B-2]$ were uniquely determined (up to trivial ambiguities), then performing an analogous analysis as in \textbf{Step 2} uniquely determines $\boldsymbol{y}[B-1]$.
	
	Observe that at \textbf{Step 2}, the signal $\boldsymbol{y}$ is uniquely determined meaning that $m\geq 3B$ measurements are needed to solve this problem. If, in addition, the signal's power spectrum is known and $N \geq 3$, then $m\geq2B$ suffice. \vspace{-0.6em}
\end{proof}

A direct consequence of Proposition \ref{theo:uniqueness} is the following Corollary~\ref{coro:time}, which similarly states recovery of almost all time-limited signals as in Definition \ref{def:timelimited}, under mild conditions.
\begin{corollary}
	Assume $\boldsymbol{x}\in \mathbb{C}^{N}$ to be a $S$-time-limited signal as in Definition \ref{def:timelimited} for some $S\leq N/2$. Then, almost all signals are uniquely determined from their third-order cumulant (equivalent to BF in the time domain), up to trivial ambiguities, from $m\geq 3S$ measurements. If, in addition, we have access to the signal's power and $N \geq 3$, then $m\geq2S$ measurements suffice.
	\label{coro:time}
\end{corollary}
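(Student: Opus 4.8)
The plan is to derive Corollary~\ref{coro:time} from Proposition~\ref{theo:uniqueness} by a time--frequency duality, so that essentially no new analysis is needed. First I would make precise the object referred to as ``the third-order cumulant (equivalent to BF in the time domain)'': it is the array
\begin{align*}
	\boldsymbol{C}[k_{1},k_{2}] = \boldsymbol{x}[k_{1}]\,\overline{\boldsymbol{x}}[k_{2}]\,\boldsymbol{x}[k_{1}-k_{2}],
\end{align*}
i.e.\ exactly the map \eqref{eq:BF} but evaluated on the (periodic) time samples of $\boldsymbol{x}$ rather than on its Fourier transform $\tilde{\boldsymbol{x}}$. Introduce the auxiliary signal $\boldsymbol{u}\in\mathbb{C}^{N}$ defined by $\tilde{\boldsymbol{u}}=\boldsymbol{x}$ (the inverse DFT of $\boldsymbol{x}$). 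Then, by construction, $\boldsymbol{C}$ is precisely the BF of $\boldsymbol{u}$ in the sense of \eqref{eq:BF}, so that recovering $\boldsymbol{x}$ from its time-domain third-order cumulant is the same problem as recovering $\boldsymbol{u}$ from its bispectrum.

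The key step is the observation that $S$-time-limitedness of $\boldsymbol{x}$ (Definition~\ref{def:timelimited}) is the same property as $S$-band-limitedness of $\boldsymbol{u}$ (Definition~\ref{def:bandlimitedSignal}): the $N-S$ consecutive zeros of $\boldsymbol{x}$ are, by definition of $\boldsymbol{u}$, the $N-S$ consecutive zeros of the Fourier transform $\tilde{\boldsymbol{u}}=\boldsymbol{x}$. Applying Proposition~\ref{theo:uniqueness} to $\boldsymbol{u}$, with $B$ replaced by $S$ and using $S\leq N/2$, then shows that almost all such $\boldsymbol{u}$ are uniquely determined, up to trivial ambiguities, from $m\geq 3S$ entries of $\boldsymbol{C}$, and from $m\geq 2S$ entries when in addition the power of $\boldsymbol{x}$ (equivalently, the energy of $\boldsymbol{u}$) is known and $N\geq 3$. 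Since $\boldsymbol{x}=\tilde{\boldsymbol{u}}$ is obtained from $\boldsymbol{u}$ by an invertible linear map, the same conclusion transfers to $\boldsymbol{x}$. Equivalently, one may re-run Steps~0 through $S-1$ of the proof of Proposition~\ref{theo:uniqueness} verbatim with $\boldsymbol{x}$ in the role of $\boldsymbol{y}$, since $S$-time-limitedness forces $\boldsymbol{C}$ to exhibit exactly the triangular ``pyramid'' support structure of \eqref{eq:piramid}; the normalization $\boldsymbol{x}[0]=1$, the extraction of $|\boldsymbol{x}[k]|^{2}$ and $\boldsymbol{x}^{2}[k]$, and the successive sign/phase disambiguation then go through unchanged.

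The part that needs genuine care---and the only real obstacle---is the bookkeeping that makes this reduction rigorous. First, the genericity statement must be transported: the exceptional set for $\boldsymbol{u}$ in Proposition~\ref{theo:uniqueness} lies in the vanishing locus of a nonzero polynomial, and its preimage under the invertible linear map $\boldsymbol{u}\mapsto\tilde{\boldsymbol{u}}=\boldsymbol{x}$ again lies in the vanishing locus of a nonzero polynomial (compose with a linear change of coordinates); likewise the nonvanishing conditions invoked in the proof---so that the divisions in Steps~0--2 are legitimate---pull back to a nonempty Zariski-open set of $S$-time-limited signals $\boldsymbol{x}$. Second, one must check that the four trivial ambiguities listed after \eqref{eq:BF} are intrinsic symmetries of the array $\boldsymbol{C}$ and hence have the same meaning whether the two arguments are read as time or as frequency indices; this is immediate because those symmetries are purely combinatorial identities on the index set. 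Finally, some attention is needed in matching indices and conjugations so that $\boldsymbol{C}$ literally coincides with the BF of $\boldsymbol{u}$ as written in \eqref{eq:BF}; once this is arranged, Proposition~\ref{theo:uniqueness} applies as a black box and the corollary follows.
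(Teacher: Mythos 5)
Your duality reduction is sound as far as it goes, and it is essentially a rigorous rendering of what the paper leaves implicit: the paper's own proof is a one-line ``mutatis mutandis'' remark asserting that the construction of Proposition~\ref{theo:uniqueness} can be repeated over the pyramid structure \eqref{eq:piramid}, which is exactly what your reduction (time-limited $\boldsymbol{x}$ corresponds to band-limited $\boldsymbol{u}$ with $\tilde{\boldsymbol{u}}=\boldsymbol{x}$, then apply Proposition~\ref{theo:uniqueness} as a black box) formalizes; the transfer of genericity and of the trivial ambiguities under the invertible linear map is handled correctly. One small slip: the dual of ``the power spectrum of $\boldsymbol{u}$ is known'' is that $\lvert \boldsymbol{x}[n]\rvert^{2}$ is known for every $n$ (the signal's instantaneous power), not ``the energy of $\boldsymbol{u}$,'' which is a single scalar.

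The genuine point of concern is your first step, where you ``make precise'' the third-order cumulant as the pointwise array $\boldsymbol{x}[k_{1}]\overline{\boldsymbol{x}}[k_{2}]\boldsymbol{x}[k_{1}-k_{2}]$. That is not the object the paper (or the standard definition) uses: the paper's proof explicitly defines $\boldsymbol{C}[n_{1},n_{2}]=\frac{1}{N}\sum_{n}\boldsymbol{x}[n]\overline{\boldsymbol{x}}[n-n_{1}]\boldsymbol{x}[n+n_{2}]$, a lag-sum whose individual entries are sums of triple products (for instance $\boldsymbol{C}[0,0]=\frac{1}{N}\sum_{n}\lvert \boldsymbol{x}[n]\rvert^{2}\boldsymbol{x}[n]$, not $\boldsymbol{x}[0]\lvert\boldsymbol{x}[0]\rvert^{2}$). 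The two arrays carry the same information only when fully sampled, being related by a two-dimensional Fourier transform, so the corollary's claim about $m\geq 3S$ \emph{measurements} depends on which array is sampled, and your per-entry count does not transfer entrywise to the cumulant as the paper defines it. To close this gap you must either state explicitly that ``third-order cumulant'' is to be read as the time-domain BF array (the reading under which your black-box application of Proposition~\ref{theo:uniqueness} is complete), or run the peeling argument directly on the lag-sum array, exploiting the support structure induced by $S$-time-limitedness (extremal lags at which only a single term of the sum survives) --- which is the ``analogous construction'' the paper gestures at but does not spell out.
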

\begin{proof}
	The third-order cumulant, which is equivalent to BF in the time-domain, is 
	\begin{align}
		\boldsymbol{C}[n_{1},n_{2}] = \frac{1}{N}\sum_{n=0}^{N-1} \boldsymbol{x}[n]\overline{\boldsymbol{x}}[n-n_{1}]\boldsymbol{x}[n+n_{2}].
	\end{align}
	The proof follows, \textit{mutatis mutandis}, by performing an analogous construction procedure as in Proposition \ref{theo:uniqueness} over \eqref{eq:piramid}. 
\end{proof}

Evidently, Proposition \ref{theo:uniqueness} and Corollary \ref{coro:time} state that  all frequencies (time-shifts) of the BF (third-order cumulant) are \textit{not} needed to recover $\boldsymbol{x}$. Therefore, a recovery algorithm for this regime is desired. The proof also reveals that the first and the $(B-1)/(S-1)$-th rows of the BF and the third-order cumulant must be perfectly preserved in order to ensure uniqueness (up to ambiguities). 

Recent works have shown that minimizing the amplitude least-squares objective leads to better estimations under noisy scenarios \cite{zhang2016reshaped,8410803}. The objective to recover $\boldsymbol{y}$ is \vspace{-1em}
\begin{align}
	\min_{\boldsymbol{y}\in \mathbb{C}^{N}} h(\boldsymbol{y}) = \sum_{k_{1},k_{2}=0}^{N-1}\left \lvert \boldsymbol{B}[k_{1},k_{2}]- \boldsymbol{y}[k_{1}]\overline{\boldsymbol{y}}[k_{2}]\boldsymbol{y}[k_{1}-k_{2}] \right\rvert^{2},
	\label{eq:auxproblem}
\end{align}
We propose a trust region algorithm based on the Cauchy point (which lies on the gradient and minimizes the quadratic cost function by iteratively refining the trust region of the solution) to solve \eqref{eq:auxproblem} that we initialize via a spectral procedure.\vspace{-0.8em}

\section{Reconstruction Algorithm}
\label{sec:algorithm}\vspace{-0.5em}
The standard update rule in trust region methods takes the form $\boldsymbol{x}^{(t+1)}:=\boldsymbol{x}^{(t)} + \alpha^{(t)}\boldsymbol{b}^{(t)},$ where $\alpha^{(t)}$ is the step size at iteration $t$ and the vector $\boldsymbol{b}^{(t)}$ is chosen as
\begin{align}
	\argmin_{\boldsymbol{b}\in \mathbb{C}^{N}} &\hspace{0.5em}h(\boldsymbol{x}^{(t)}) + 2\mathcal{R}\left(\boldsymbol{b}^{H}\boldsymbol{d}^{(t)}\right), \hspace{1em} s.t \hspace{0.5em}\lVert \boldsymbol{b} \rVert_{2}\leq \Delta^{(t)},
	\label{eq:trust}
\end{align}
with $\mathcal{R}(\cdot)$ as the real part function, and $\boldsymbol{d}^{(t)}$ as the gradient of $h(\boldsymbol{z})$ with respect to $\overline{\boldsymbol{z}}$ at iteration $t$. The solution to \eqref{eq:trust} is $\boldsymbol{b}^{(t)} = - \frac{\Delta^{(t)}}{\lVert \boldsymbol{d}^{(t)} \rVert_{2}} \boldsymbol{d}^{(t)},$ according to \cite[Chapter 4]{nocedal2006numerical}.

To alleviate the memory requirements and computational complexity required for large $N$, we suggest a block stochastic gradient descent strategy. Then, the block stochastic version of $\boldsymbol{d}^{(t)}$, using the Wirtinger derivative (a partial first-order derivative of complex variables with respect to a real variable), is 
\begin{align}
	\label{eq:functionh}
	&\boldsymbol{d}_{\Gamma_{(t)}}[p]=-\sum_{k\in \Gamma_{(t)}} \boldsymbol{y}[k]\boldsymbol{y}[k-p]\overline{\boldsymbol{B}}[k,p] \nonumber\\
	&-\sum_{k\in \Gamma_{(t)}} \boldsymbol{y}[k]\overline{\boldsymbol{y}}[p-k]\boldsymbol{B}[p,k] - \sum_{k\in \Gamma_{(t)}} \boldsymbol{B}[k,k-p]\boldsymbol{y}[k-p]\overline{\boldsymbol{y}}[k] \nonumber\\
	&+\sum_{k\in \Gamma_{(t)}} \boldsymbol{y}[p]\lvert \boldsymbol{y}[k] \rvert^{2}\left( \lvert \boldsymbol{y}[p-k] \rvert^{2} + \lvert \boldsymbol{y}[k-p] \rvert^{2}\right) \nonumber\\
	&+ \sum_{k\in \Gamma_{(t)}} \boldsymbol{y}[p]\lvert \boldsymbol{y}[k+p] \rvert^{2}\lvert \boldsymbol{y}[k] \rvert^{2}.
\end{align}
The set $\Gamma_{(t)}$ is chosen uniformly and independently at random at each iteration $t$ from subsets of $\{1,\cdots,N \}^{2}$ with cardinality $Q$. In this way, the gradient is uniformly sampled using a minibatch of data of size $Q$ for each step update, such that, in expectation, it is the true gradient \cite[page 130]{spall2005introduction}. 

For initialization, we consider the first column of the BF. 
From \eqref{eq:BF}, define $\boldsymbol{s}\in \mathbb{C}^{N}$ as $\boldsymbol{s}^{2}[n] = \frac{1}{\overline{\boldsymbol{y}}[0]}\boldsymbol{B}[n,0] = \boldsymbol{y}[n]^{2},$ for $n=0,\dots,N-1$. Following Proposition \ref{theo:uniqueness}, the value of $\boldsymbol{y}[0]$ is determined from $\boldsymbol{B}[0,0]$ and it is assumed to be different than zero for almost all signals. Recall that $\boldsymbol{y}$ is the Fourier transform of $\boldsymbol{x}$, which is approximated as $\boldsymbol{s}[n] = \pm \sqrt{\frac{1}{\overline{\boldsymbol{y}}[0]}\boldsymbol{B}[n,0]} \equiv \boldsymbol{y}[n]$. However, the above inversion yields $2^{N}$ possible solutions. 

Fortunately, the Fourier transform of the underlying signal enjoys several strong constraints. First, the spectrum is always square-integrable following the finite energy of the pulse. Also, the spectrum is always positive. Then, following the Paley-Wiener theorem, the inverse Fourier transform $\boldsymbol{s}$ is a $C^{\infty}$ function, which is continuous and has infinitely many continuous derivatives at every (temporal) point \cite{strichartz2003guide}. Note that the $C^{\infty}$ property over the spectrum is not related to its dimension. Rather, it implies that for discrete differentiation over the signal (via the finite differences method because the signal is a vector), the shape of the resultant signal has to be continuous because it is $C^{\infty}$. Then, exploiting the continuity of $\boldsymbol{y}$ we can eliminate all ambiguities. 

We summarize this process below in Algorithm \ref{alg:initialization}, which starts at some point $n_{0}\in \{ 0,N-1\}$ and chooses the root $\boldsymbol{s}[n_{0}] = \sqrt{\frac{1}{\overline{\boldsymbol{y}}[0]}\boldsymbol{B}[n,0]}.$ Then, we find the two possible roots for the next temporal point, $n_{p}+1$ for $p=0,\dots,N-2$. If $\lvert \boldsymbol{s}[n_{p}] \rvert$ and $\lvert \boldsymbol{s}[n_{p}+1] \rvert$ are both nonzero, one of the roots $\lvert \boldsymbol{s}_{\pm}[n_{p}+1] \rvert$ will necessarily be considerably closer to $\lvert \boldsymbol{s}[n_{p}] \rvert$ than its additive inverse. So, we eliminate the more distant root from Line 9 to Line 12, and it is then on to the next temporal point. While this procedure may be continued \textit{ad infinitum}, Paley-Wiener theorem implies we  go only as far as the second derivative by building $\Delta_{0},\Delta_{1}$, and $\Delta_{2}$ in Lines 5, 6, and 7, respectively. Once the main loop of Algorithm \ref{alg:initialization} is finished, we return the inverse Fourier transform of $\boldsymbol{s}$ as the estimate of the signal $\boldsymbol{x}$. 
\begin{algorithm}[H]
	\caption{Initialization Procedure}
	\label{alg:initialization}
	\small
	\begin{algorithmic}[1]
		\State{\textbf{Input: }$\left\lbrace\boldsymbol{B}[k_{1},k_{2}]:k_{1},k_{2}=0,\cdots,N-1 \right\rbrace$. Choose $n_{0}\in \{0,\dots,N-1 \}$, and $\gamma_{0}=0.09,  \gamma_{1}=0.425,\gamma_{2}=1.0$}
		\State{\textbf{Output:} $\boldsymbol{x}^{(0)}$ (estimation of $\boldsymbol{x}$).}
		\State{\textbf{Notation: } $\boldsymbol{s}_{\pm}[n]$ and $\epsilon_{\pm}$ stand for evaluating both roots. $\mathcal{F}^{-1}$ denotes inverse Fourier transform.}
		\For{$p=0$ to $N-2$}
		\State{$\Delta_{0,\pm} \gets \boldsymbol{s}_{\pm}[n_{p} + 1] - \boldsymbol{s}[n_{p}]$}
		\State{$\Delta_{1,\pm} \gets (\boldsymbol{s}_{\pm}[n_{p} + 1] - \boldsymbol{s}[n_{p}]) - (\boldsymbol{s}_{\pm}[n_{p}] - \boldsymbol{s}[n_{p}-1])$}
		\State{$\Delta_{2,\pm} \gets (\boldsymbol{s}_{\pm}[n_{p} + 1] - \boldsymbol{s}[n_{p}]) - (\boldsymbol{s}_{\pm}[n_{p}] - \boldsymbol{s}[n_{p}-1]) - (\boldsymbol{s}_{\pm}[n_{p}-1] - \boldsymbol{s}[n_{p}-2])$}
		\State{$\epsilon_{\pm} \gets \gamma_{0} \lvert \Delta_{0,\pm}\rvert^{2} + \gamma_{1}\lvert\Delta_{1,\pm} \rvert^{2} + \gamma_{2}\lvert \Delta_{2,\pm}\rvert^{2}$}
		\If{$\epsilon_{+} > \epsilon_{-}$}
		\State{$\boldsymbol{s}[n_{p}+1] = \sqrt{\frac{1}{\overline{\boldsymbol{y}}[0]}\boldsymbol{B}[n,0]}$}
		\Else
		\State{$\boldsymbol{s}[n_{p}+1] = -\sqrt{\frac{1}{\overline{\boldsymbol{y}}[0]}\boldsymbol{B}[n,0]}$}
		\EndIf
		\State{$n_{p} \gets n_{p} +  1$}
		\EndFor
		\State{\textbf{return: }$\boldsymbol{x}^{(0)}:= \mathcal{F}^{-1}(\boldsymbol{s})$.}
	\end{algorithmic}
\end{algorithm}

Note that the constants $\gamma_{0}, \gamma_{1}$, and $\gamma_{2}$ are chosen using a cross-validation strategy such that it provides the closest estimation of the true $\boldsymbol{y}$. Algorithm~\ref{alg:smothing} summarizes the entire BF inversion procedure.

\begin{algorithm}[H]
	\caption{Signal recovery from the bispectrum function}
	\label{alg:smothing}
	\small
	\begin{algorithmic}[1]
		\State{\textbf{Input: }Data $\left\lbrace\boldsymbol{B}[k_{1},k_{2}]:k_{1},k_{2}=0,\cdots,N-1 \right\rbrace$. Choose $\alpha^{(0)}\in(0,1)$, $Q\in \{1,\cdots,N^{2}\}$, $\epsilon>0$, $\Delta^{(0)}\in (0, \frac{1}{Q}\lVert \boldsymbol{x} \rVert_{2})$, and $\gamma,\gamma_{1},\delta_{1},\delta_{2}\in (0,1)$.}
		\State {Initial point $\boldsymbol{x}^{(0)} \leftarrow$ Algorithm 2$\left(\boldsymbol{B}[k_{1},k_{2}]\right)$.}
		\While{$\left\lVert\boldsymbol{b}_{\Gamma_{(t)}}\right\rVert_{2}\geq \epsilon$}
		\Statex{Choose $\Gamma_{(t)}$ uniformly at random from the subsets of $\{1,\cdots,N \}^{2}$ with cardinality $Q$ per iteration $t\geq 0$.}
		\State{Compute $\boldsymbol{d}_{\Gamma_{(t)}}$ and $\boldsymbol{b}^{(t)} = - \frac{\Delta^{(t)}}{\lVert \boldsymbol{d}_{\Gamma_{(t)}} \rVert_{2}} \boldsymbol{d}_{\Gamma_{(t)}}$ and  $\rho = 1$}
		\While{\small{$h( \boldsymbol{x}^{(t)} + \rho \boldsymbol{b}^{(t)} ) > h( \boldsymbol{x}^{(t)}) + \delta_{1}\rho \mathcal{R}(\boldsymbol{d}_{\Gamma_{(t)}}^{H} \boldsymbol{b}^{(t)})$}}
		\State $\rho = \delta_{2} \rho $ 
		\EndWhile
		\State{Set $\alpha^{(t)} = \rho$, and $\displaystyle\boldsymbol{x}^{(t+1)}=\boldsymbol{x}^{(t)} + \alpha^{(t)} \boldsymbol{b}_{\Gamma_{(t)}}$} 
		\If{$\displaystyle\left\lVert\boldsymbol{b}_{\Gamma_{(t)}}\right\rVert_{2}\geq \gamma \Delta^{(t)}$}
		\State{$\Delta^{(t+1)}=\Delta^{(t)}$.}
		\Else
		\State{$\Delta^{(t+1)} = \gamma_{1}\Delta^{(t)}$.}
		\EndIf
		\EndWhile
		\State{\textbf{return: } $\boldsymbol{x}^{(T)}$.}
	\end{algorithmic}
\end{algorithm}\vspace{-2em}

\vspace{-1em}

\section{Numerical Results}
\label{sec:numerical}
We evaluated the performance of Algorithm \ref{alg:smothing} through numerical experiments, wherein we used the following parameters: 
$\gamma_{1}=0.1$, $\gamma=0.1$, $\alpha = 0.6$, and $\epsilon=1\times 10^{-4}$. The number of indices that are chosen uniformly at random is fixed as $Q=N$. We built a set of $\left\lceil \frac{N-1}{2} \right\rceil$-band-limited signals that conform to a Gaussian power spectrum. 
Specifically, each signal ($N=128$ grid points) is produced via the Fourier transform of a complex vector with a Gaussian-shaped amplitude. 
Next, we multiply the obtained power spectrum by a uniformly distributed random phase. All simulations were implemented in Matlab R2021a on an Intel Core i5 2.5Ghz CPU with 16 GB RAM. 

The performance of Algorithm \ref{alg:smothing} is evaluated for complete and incomplete noisy BFs, where the signal-to-noise-ratio (SNR) is defined as SNR$ = 10\log_{10}(\lVert \mathbf{B} \rVert^{2}_{\text{F}}/\lVert \boldsymbol{\sigma} \rVert^{2}_{\text{2}})$, with $\boldsymbol{\sigma}$ as the variance of the noise. The BF is incomplete when few vertical frequencies, i.e. $k_{1}$ dimension, are uniformly removed. We measure the relative error between the true signal $\boldsymbol{x}{x}$ and any $\boldsymbol{w}\in \mathbb{C}^{N}$ as $\text{dist}(\boldsymbol{x},\boldsymbol{w}):= \frac{\left\lVert \boldsymbol{B}-\boldsymbol{W} \right\rVert_{\text{F}}}{\left\lVert \boldsymbol{B} \right\rVert_{\text{F}}},$ where $\boldsymbol{W}$ is the BF of $\boldsymbol{w}$. 

\begin{figure}[t]
	\centering
	\includegraphics[width=0.8\linewidth]{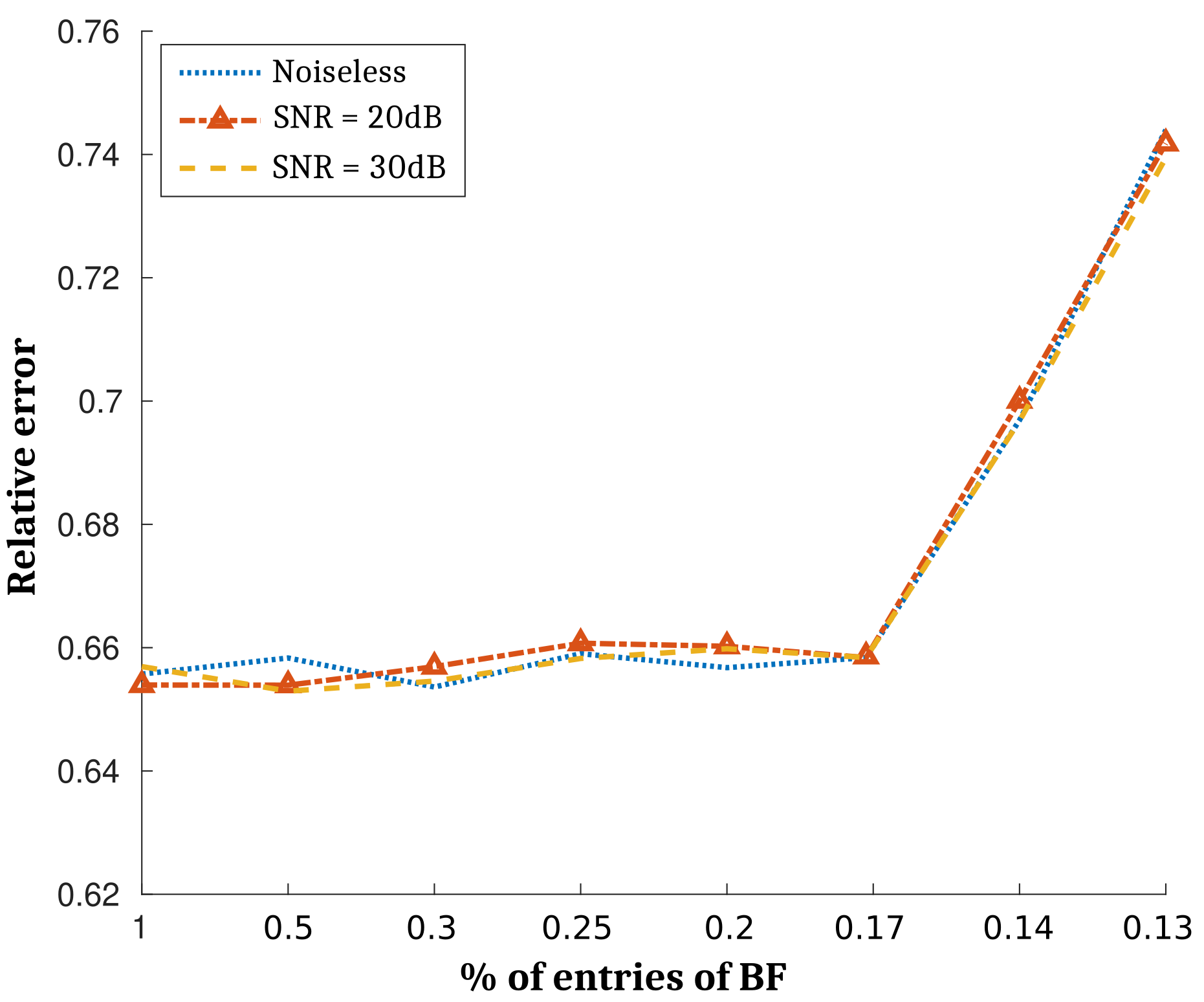}
	\caption{Performance of $\boldsymbol{x}^{(0)}$ obtained by Algorithm \ref{alg:initialization} at different SNR levels, for different percentages of (uniformly) removed frequencies in the $k_{1}$ dimension. The relative error was averaged over 100 trials.} \vspace{-1em}
	\label{fig:init}
\end{figure}

\begin{figure}[t]
	\centering
	\includegraphics[width=0.8\linewidth]{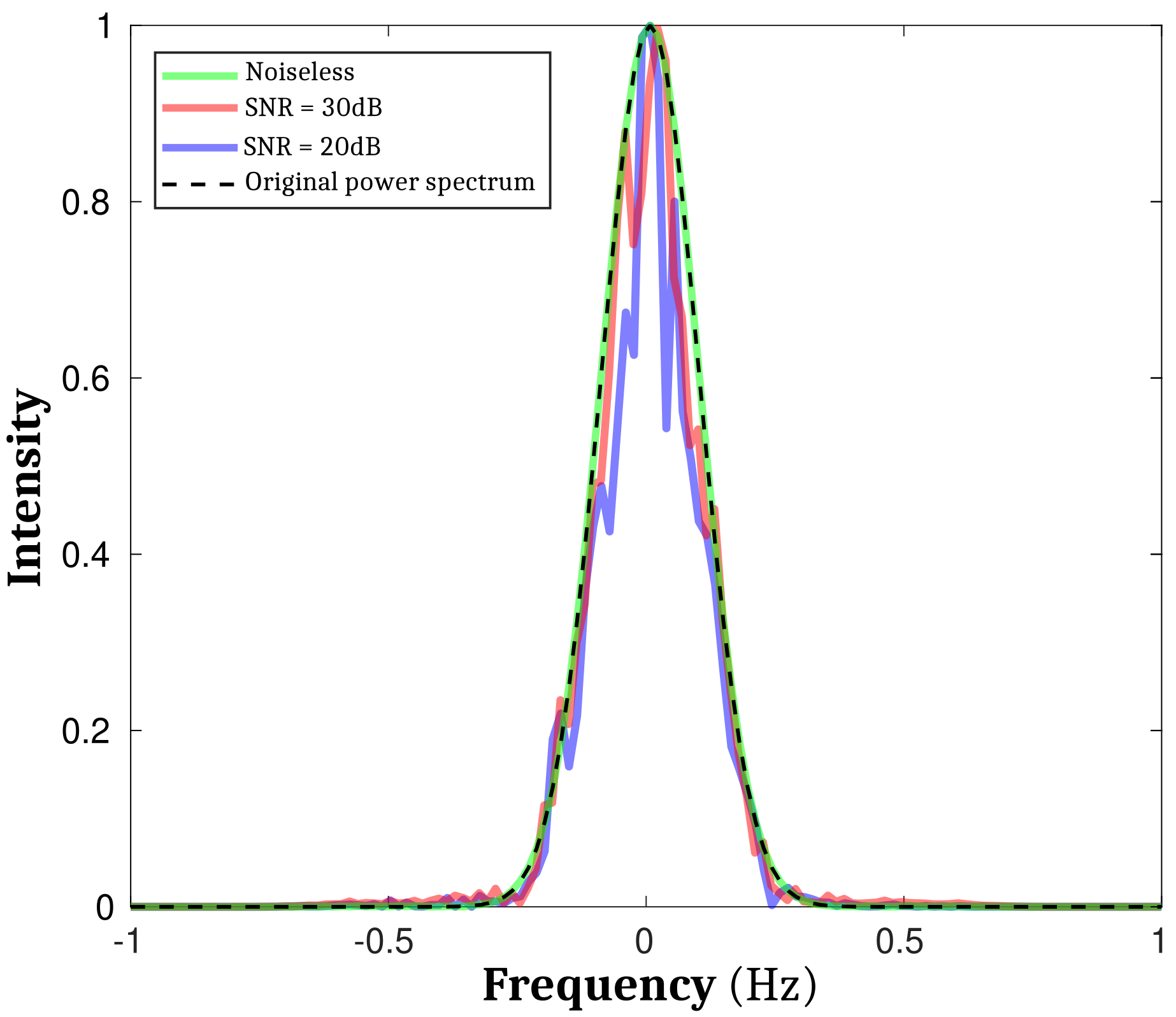}
	\caption{Example of an estimated power spectrum of the underlying signal $\boldsymbol{x}$ using Algorithm \ref{alg:initialization}. The noisy cases have SNR$=20$ and $30$ dB.}\vspace{-1em}
	\label{fig:power}
\end{figure}

We examine the performance of vector $\boldsymbol{x}^{(0)}$ obtained by Algorithm \ref{alg:initialization} to approximate the pulse $\boldsymbol{x}$ with complete as well as sparse samples of the BF. We numerically determine the relative error averaged over 100 trials for both noiseless and noisy scenarios for SNR $=20$ and $30$ dB. Fig. \ref{fig:init} shows the effectiveness of Algorithm \ref{alg:initialization} to estimate the underlying pulse. To complement the numerical results in Fig. \ref{fig:init}, we present the power spectrum estimate of $\boldsymbol{x}$. Fig.~\ref{fig:power} shows an approximated version of $\lvert \boldsymbol{y} \rvert^{2} $ to visually illustrate the ability of Algorithm \ref{alg:initialization} to correctly estimate $\boldsymbol{y}$ 
while uniquely determining the roots of $\boldsymbol{s}[n]$ to approximate the underlying signal $\boldsymbol{x}$ even in the presence of Gaussian noise.

\begin{figure}[t]
	\centering
	\includegraphics[width=0.8\linewidth]{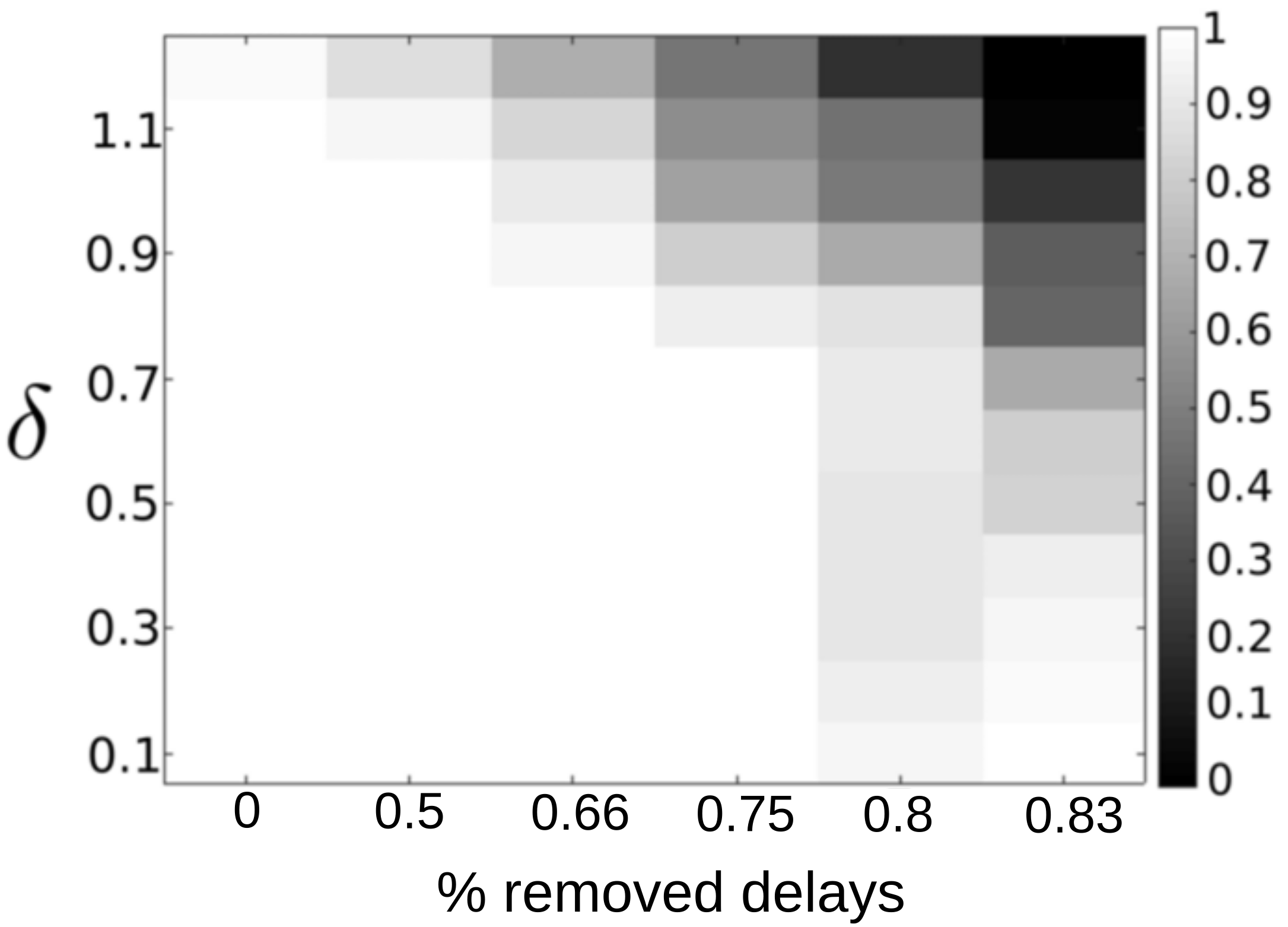}
	\caption{Empirical success rate of Algorithm \ref{alg:smothing} as a function of \% (uniformly) removed delays and $\delta$ in the absence of noise.}\vspace{-0.5em}
	\label{fig:orthoini}
	\vspace{-1em}
\end{figure}

Finally, we evaluated the success rate of Algorithm \ref{alg:smothing} in Fig. \ref{fig:orthoini}. Here, Algorithm \ref{alg:smothing} is initialized at $\boldsymbol{x}^{(0)} = \boldsymbol{x}+\delta\zeta$, where $\delta$ is a fixed constant and $\zeta$ takes values on $\{-1,1\}$ with equal probability, while a percentage of the delays are set to zero. A trial is declared successful when the returned estimate attains a relative error that is smaller than $10^{-6}$. We numerically determine the empirical success rate among 100 trials. Fig. \ref{fig:orthoini} shows that Algorithm \ref{alg:smothing} is able to estimate the pulse when the BF is incomplete.

\section{Summary}
\label{sec:conclusion}
We analytically demonstrated that band-limited signals can be estimated (up to trivial ambiguities) from its BF through our trust region gradient method. We evaluated our proposed technique for complete/incomplete noisy and noiseless scenarios. 
In the case of incomplete data, we found that although Proposition \ref{theo:uniqueness} suggests that the full BF is not required to guarantee uniqueness, there is potential to better estimate the pulses from incomplete data. Numerical experiments showed reasonably accurate performance for recovering both the magnitude and phase of the signal even from noisy incomplete data.

\section{Acknowledgment}
K. V. M. acknowledges partial support from the National Academies of Sciences, Engineering, and Medicine via the Army Research Laboratory Harry Diamond Distinguished Fellowship. Research was sponsored by the Army Research Laboratory and was accomplished under Cooperative Agreement Number W911NF-21-2-0288. The views and conclusions contained in this document are those of the authors and should not be interpreted as representing the official policies, either expressed or implied, of the Army Research Laboratory or the U.S. Government. The U.S. Government is authorized to reproduce and distribute reprints for Government purposes notwithstanding any copyright notation herein. S. P. acknowledges support from the EMET Research Institute in Colombia.

\bibliographystyle{IEEEtran}
\bibliography{report}

\end{document}